\newtheorem{theorem}{Theorem}
\newtheorem{lemma}{Lemma}
\newcommand{\N}{\mathbb N}
\newcommand{\Z}{\mathbb Z}
\newcommand{\K}{\textbf{\textit{K}}}
\author{Apoloniusz Tyszka}
\title{\large {\sl MuPAD} codes which implement \mbox{limit-computable}
functions that cannot be bounded by any computable function}
\date{}
\begin{document}
\begin{sloppypar}
\def\UrlFont{\em}
\maketitle
\begin{abstract}
For a positive integer $n$, let \mbox{$f(n)$} denote the smallest \mbox{non-negative}
integer $b$ such that for each system
\mbox{$S \subseteq \{x_k=1,~x_i+x_j=x_k,~x_i \cdot x_j=x_k:~i,j,k \in \{1,\ldots,n\}\}$}
with a solution in \mbox{non-negative} integers \mbox{$x_1,\ldots,x_n$} there exists a solution
of $S$ in \mbox{non-negative} integers not greater than $b$. We prove that the function $f$
is strictly increasing and dominates all computable functions.
We present an infinite loop in {\sl MuPAD} which takes as input a positive integer $n$ and
returns a \mbox{non-negative} integer on each iteration. Let \mbox{$g(n,m)$} denote the number
returned on the \mbox{$m$-th} iteration, if $n$ is taken as input. Then,
\vskip 0.2truecm
\noindent
\centerline{$g(n,m) \leq m-1$,}
\vskip 0.2truecm
\noindent
\centerline{$0=g(n,1)<1=g(n,2) \leq g(n,3) \leq g(n,4) \leq \ldots$}
\vskip 0.01truecm
\noindent
and
\vskip 0.01truecm
\noindent
\centerline{$g(n,f(n))<f(n)=g(n,f(n)+1)=g(n,f(n)+2)=g(n,f(n)+3)=\ldots$}
\vskip 0.2truecm
\par
A {\sl MuPAD} code constructed on another principle contains a \mbox{repeat-until}
loop and implements a \mbox{limit-computable} function
\mbox{$\xi: \N \to \N$} that cannot be bounded by any computable function. This code takes as input
a \mbox{non-negative} integer $n$, immediately returns $0$, and computes a system $S$ of polynomial
equations. If the loop terminates for $S$, then the next instruction is executed and returns \mbox{$\xi(n)$}.
\end{abstract}
\vskip 0.4truecm
\noindent
{\bf Key words:} Hilbert's Tenth Problem, infinite loop, \mbox{limit-computable} function,
{\sl MuPAD}, \mbox{trial-and-error} computable function.
\vskip 0.8truecm
\noindent
{\bf 2010 Mathematics Subject Classification:} 03D25, 11U05, 68Q05.
\newpage
{Limit-computable} functions, also known as \mbox{trial-and-error} computable functions,
have been thoroughly studied, see \mbox{\cite[pp.~233--235]{Soare}} for the main results.
Our first goal is to present an infinite loop in {\sl MuPAD} which finds the values
of a \mbox{limit-computable} function \mbox{$f:\N \setminus \{0\} \to \N \setminus \{0\}$}
by an infinite computation, where $f$ dominates all computable functions.
There are many \mbox{limit-computable} functions \mbox{$f:\N \setminus \{0\} \to \N \setminus \{0\}$}
which cannot be bounded by any computable function. For example, this follows from
\cite[p.~38,~item~4]{Janiczak}, see also \mbox{\cite[p.~268]{Murawski}} where Janiczak's result is mentioned.
Unfortunately, for all known such functions~$f$, it is difficult to write a suitable computer program.
\mbox{A sophisticated} choice of a \mbox{function $f$} will allow us to do it.
\vskip 0.2truecm
\par
Let
\vskip 0.2truecm
\noindent
\centerline{$E_n=\{x_k=1,~x_i+x_j=x_k,~x_i \cdot x_j=x_k:~i,j,k \in \{1,\ldots,n\}\}$}
\vskip 0.2truecm
\par
For a positive integer $n$, let \mbox{$f(n)$} denote the smallest \mbox{non-negative} integer~$b$ such that
for each system \mbox{$S \subseteq E_n$} with a solution in \mbox{non-negative} integers \mbox{$x_1,\ldots,x_n$}
there exists a solution of $S$ in \mbox{non-negative} integers not greater than~$b$.
This definition is correct because there are only finitely many subsets of \mbox{$E_n$}.
\vskip 0.2truecm
\par
The system
\begin{displaymath}
\left\{
\begin{array}{rcl}
x_1 &=& 1 \\
x_1+x_1 &=& x_2 \\
x_2 \cdot x_2 &=& x_3 \\
x_3 \cdot x_3 &=& x_4 \\
&\ldots& \\
x_{n-1} \cdot x_{n-1} &=&x_n
\end{array}
\right.
\end{displaymath}
\noindent
has a unique integer solution, namely
\mbox{$\left(1,2,4,16,\ldots,2^{\textstyle 2^{n-3}},2^{\textstyle 2^{n-2}}\right)$}.
Therefore, \mbox{$f(1)=1$} and \mbox{$f(n) \geq 2^{\textstyle 2^{n-2}} \geq 2$} for any \mbox{$n \geq 2$}.
\begin{lemma}\label{lem0}
For each integer \mbox{$n \geq 2$}, \mbox{$f(n+1) \geq f(n)^2>f(n)$}.
\end{lemma}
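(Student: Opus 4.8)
The plan is to read the definition of $f(n)$ as a max--min quantity and then realise $f(n+1)$ by squaring the extremal configuration that witnesses $f(n)$. For a solvable system $S$ write $c(S)$ for the least integer $B$ such that $S$ has a solution with all variables $\le B$; then $f(n)=\max\{c(S):S\subseteq E_n,\ S\ \text{solvable}\}$, the maximum being over the finitely many subsystems of $E_n$. First I would fix a system $S\subseteq E_n$ attaining this maximum, so that $S$ is solvable, $S$ has no solution in $\{0,1,\ldots,f(n)-1\}$, and $S$ has a solution $(a_1,\ldots,a_n)$ with $\max_i a_i=f(n)$. Let $m$ be an index with $a_m=f(n)$, i.e. a coordinate that carries the maximum in this solution.

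Next I would pass from $n$ to $n+1$ variables by squaring that coordinate. Since $i=j=m$ and $k=n+1$ are admissible indices in $E_{n+1}$, the equation $x_m\cdot x_m=x_{n+1}$ belongs to $E_{n+1}$, so $S'=S\cup\{x_m\cdot x_m=x_{n+1}\}\subseteq E_{n+1}$. Any solution $(b_1,\ldots,b_n)$ of $S$ extends to the solution $(b_1,\ldots,b_n,b_m^2)$ of $S'$, so $S'$ is solvable and therefore $f(n+1)\ge c(S')$. It then remains to bound $c(S')$ from below by $f(n)^2$.

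The decisive point is this lower bound. A solution of $S'$ is exactly a solution $(b_1,\ldots,b_n)$ of $S$ together with $b_{n+1}=b_m^2$, so $c(S')\ge f(n)^2$ amounts to the assertion that every solution $b$ of $S$ satisfies $\max\{\max_{i\le n}b_i,\ b_m^2\}\ge f(n)^2$. Since $S$ already forces $\max_{i\le n}b_i\ge f(n)$, this follows once I know that the distinguished coordinate is itself forced, i.e. $b_m\ge f(n)$ for every solution $b$ of $S$. This rigidity is the main obstacle, and it does \emph{not} hold for an arbitrary maximiser: a system whose only cheap solutions move the large value between two symmetric coordinates would defeat the squaring of any single variable. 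To secure it I would refine the choice of witness, replacing $S$ by the system of all equations of $E_n$ that hold at $(a_1,\ldots,a_n)$ (this keeps $(a_1,\ldots,a_n)$ a solution, only shrinks the solution set, and hence still has cost $f(n)$), or by an inclusion--minimal maximiser, and then prove that such a rigid witness pins the maximal coordinate across all of its solutions, ideally producing a witness with a unique solution in the spirit of the tower system displayed just above. Establishing this forcing is where the real work lies.

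Granting the lower bound, every solution of $S'$ has a coordinate at least $f(n)^2$, so $c(S')\ge f(n)^2$ and thus $f(n+1)\ge f(n)^2$. Finally, the strict inequality $f(n)^2>f(n)$ is immediate from $f(n)\ge 2$ for $n\ge 2$, recorded just before the lemma; combining the two gives $f(n+1)\ge f(n)^2>f(n)$ for every $n\ge 2$.
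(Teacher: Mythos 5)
Your construction is the same one the paper uses: append a squaring equation $x_m\cdot x_m=x_{n+1}$ to a worst-case system $S\subseteq E_n$ and observe that the augmented system lies in $E_{n+1}$ and is still solvable. You have also correctly isolated the step on which the whole inequality $f(n+1)\geq f(n)^2$ hinges: from $c(S)=f(n)$ one only knows that \emph{every} solution of $S$ has \emph{some} coordinate $\geq f(n)$, whereas the squaring trick needs a single fixed index $m$ such that $b_m\geq f(n)$ in every solution $b$ of $S$ --- otherwise a solution whose large value sits at a coordinate other than $m$ gives an extension of norm possibly below $f(n)^2$. Your proposed remedies (replace $S$ by the set of all equations of $E_n$ satisfied by the witness tuple, or pass to an inclusion-minimal or duplicate-minimal maximiser) are reasonable directions, but you explicitly stop short of proving that any of them pins the maximal coordinate across all solutions. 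Since that forcing statement is exactly what turns ``$S'$ is solvable'' into ``$c(S')\geq f(n)^2$'', your argument is incomplete at its decisive point; as written it only yields $f(n+1)\geq f(n)$ together with the (correct) observation that $f(n)^2>f(n)$ because $f(n)\geq 2$ for $n\geq 2$.

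You should know that the paper's own proof of this lemma is a single sentence asserting precisely the solvability of $S\cup\{x_i\cdot x_i=x_{n+1}\}$ for each $i$, and it is silent on the forcing issue you raise. So you have not missed an idea that the paper supplies; rather, you have located a genuine gap that the paper's one-line argument glosses over. A complete proof would have to either exhibit a maximiser whose solutions all attain their maximum at a common coordinate (for instance by proving rigidity for a suitably minimal witness, in the spirit of the unique-solution tower system displayed before the lemma), or find a different system in $E_{n+1}$ whose cost is provably at least $f(n)^2$. Until one of these is done, neither your write-up nor the paper's establishes the inequality $f(n+1)\geq f(n)^2$.
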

\begin{proof}
If a system \mbox{$S \subseteq E_n$} has a solution in \mbox{non-negative}
integers \mbox{$x_1,\ldots,x_n$}, then for each \mbox{$i \in \{1,\ldots,n\}$} the system
\mbox{$S \cup \{x_i \cdot x_i=x_{n+1}\} \subseteq E_{n+1}$} has a solution
in \mbox{non-negative} integers \mbox{$x_1,\ldots,x_{n+1}$}.
\end{proof}
\newpage
For each integer \mbox{$n \geq 12$}, the following system \mbox{$S \subseteq E_n$}
\vskip 0.2truecm
\noindent
\centerline{$x_1=1$~~~~~~~~~~$x_1+x_1=x_2$~~~~~~~~~~$x_2+x_2=x_3$~~~~~~~~~~$x_1+x_3=x_4$}
\vskip 0.2truecm
\noindent
\centerline{$x_4 \cdot x_4=x_5$~~~~~~~~~~$x_5 \cdot x_5=x_6$~~~~~~~~~~$x_6 \cdot x_7=x_8$~~~~~~~~~~$x_8 \cdot x_8=x_9$}
\vskip 0.2truecm
\noindent
\centerline{$x_{10} \cdot x_{10}=x_{11}$~~~~~~~~~~$x_{11}+x_1=x_{12}$~~~~~~~~~~$x_4 \cdot x_9=x_{12}$}
\vskip 0.2truecm
\noindent
\centerline{$x_{12} \cdot x_{12}=x_{13}$~~~~~~~~~~$x_{13} \cdot x_{13}=x_{14}$~~~~~~~~~~\ldots~~~~~~~~~~$x_{n-1} \cdot x_{n-1}=x_n$}
\vskip 0.2truecm
\noindent
has infinitely many solutions in \mbox{non-negative} integers \mbox{$x_1,\ldots,x_n$} and each such solution \mbox{$(x_1,\ldots,x_n)$}
satisfies \mbox{${\rm max}(x_1,\ldots,x_n)>2^{\textstyle 2^{n-2}}$}, see \mbox{\cite[p.~4,~Theorem~1]{Tyszka9}}.
Hence, \mbox{$f(n)>2^{\textstyle 2^{n-2}}$} for any \mbox{$n \geq 12$}.
\vskip 0.2truecm
\par
The Davis-Putnam-Robinson-Matiyasevich theorem states that every recursively enumerable
set \mbox{${\cal M} \subseteq {\N}^n$} has a Diophantine representation, that is
\begin{equation}
\tag*{\tt (R)}
(a_1,\ldots,a_n) \in {\cal M} \Longleftrightarrow
\exists x_1, \ldots, x_m \in \N ~~W(a_1,\ldots,a_n,x_1,\ldots,x_m)=0
\end{equation}
for some polynomial $W$ with integer coefficients, see \cite{Matiyasevich}.
The polynomial~$W$ can be computed, if we know a Turing machine~$M$
such that, for all \mbox{$(a_1,\ldots,a_n) \in {\N}^n$}, $M$ halts on \mbox{$(a_1,\ldots,a_n)$} if and only if
\mbox{$(a_1,\ldots,a_n) \in {\cal M}$}, see \cite{Matiyasevich}.
The representation~{\tt (R)} is said to be \mbox{single-fold}, if for any \mbox{$a_1,\ldots,a_n \in \N$}
the equation \mbox{$W(a_1,\ldots,a_n,x_1,\ldots,x_m)=0$} has at most one solution \mbox{$(x_1,\ldots,x_m) \in {\N}^m$}.
\vskip 0.2truecm
\par
Let \mbox{$\cal{R}${\sl ng}} denote the class of all rings $\K$ that extend $\Z$.
\begin{lemma}\label{lem2}
(\mbox{\cite[p.~720]{Tyszka7}}) Let \mbox{$D(x_1,\ldots,x_p) \in {\Z}[x_1,\ldots,x_p]$}.
Assume that \mbox{$d_i={\rm deg}(D,x_i) \geq 1$} for each \mbox{$i \in \{1,\ldots,p\}$}. We can compute a positive
integer \mbox{$n>p$} and a system \mbox{$T \subseteq E_n$} which satisfies the following two conditions:
\vskip 0.2truecm
\noindent
{\tt Condition 1.} If \mbox{$\K \in {\cal R}{\sl ng} \cup \{\N,~\N \setminus \{0\}\}$}, then
\[
\forall \tilde{x}_1,\ldots,\tilde{x}_p \in \K ~\Bigl(D(\tilde{x}_1,\ldots,\tilde{x}_p)=0 \Longleftrightarrow
\]
\[
\exists \tilde{x}_{p+1},\ldots,\tilde{x}_n \in \K ~(\tilde{x}_1,\ldots,\tilde{x}_p,\tilde{x}_{p+1},\ldots,\tilde{x}_n) ~solves~ T\Bigr)
\]
{\tt Condition 2.} If \mbox{$\K \in {\cal R}{\sl ng} \cup \{\N,~\N \setminus \{0\}\}$}, then
for each \mbox{$\tilde{x}_1,\ldots,\tilde{x}_p \in \K$} with \mbox{$D(\tilde{x}_1,\ldots,\tilde{x}_p)=0$},
there exists a unique tuple \mbox{$(\tilde{x}_{p+1},\ldots,\tilde{x}_n) \in {\K}^{n-p}$} such that the tuple
\mbox{$(\tilde{x}_1,\ldots,\tilde{x}_p,\tilde{x}_{p+1},\ldots,\tilde{x}_n)$} \mbox{solves $T$}.
\vskip 0.2truecm
\noindent
Conditions 1 and 2 imply that for each \mbox{$\K \in {\cal R}{\sl ng} \cup \{\N,~\N \setminus \{0\}\}$}, the equation
\mbox{$D(x_1,\ldots,x_p)=0$} and the \mbox{system $T$} have the same number of solutions \mbox{in $\K$}.
\end{lemma}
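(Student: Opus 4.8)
The plan is to build $T$ as the \emph{trace of a straight-line evaluation} of $D$, so that each auxiliary variable $\tilde{x}_{p+1},\ldots,\tilde{x}_n$ is a uniquely determined polynomial function of $\tilde{x}_1,\ldots,\tilde{x}_p$. Once this is arranged, Condition 2 holds automatically, Condition 1 reduces to the bookkeeping remark that the closing equation of $T$ encodes exactly $D=0$, and the final cardinality claim follows because the projection onto the first $p$ coordinates is then a bijection from the solution set of $T$ onto the solution set of $D=0$: it lands in the latter by the $\Leftarrow$ part of Condition 1, it is onto by the $\Rightarrow$ part, and it is one-to-one by Condition 2.

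The equations of $E_n$ provide only the constant $1$, addition and multiplication, with no subtraction and---crucially for $\K=\N\setminus\{0\}$---no available $0$, so I would first rewrite $D=0$ in a minus-free, zero-free form. Writing $D=A-B$ with $A,B\in\N[x_1,\ldots,x_p]$ the sums of the monomials of $D$ carrying positive, resp.\ negated negative, coefficients, the equation $D=0$ is equivalent to $A=B$, and I would replace this by $A+1=B+1$; the last step is legitimate because addition is cancellative in every $\K\in{\cal R}{\sl ng}\cup\{\N,~\N\setminus\{0\}\}$, and it guarantees that both sides are nonempty sums evaluating to elements present in each such $\K$.

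Next I would read off $A+1$ and $B+1$ as arithmetic circuits over $\{+,\cdot,1\}$: a coefficient $c$ is produced from a variable $w$ fixed by $w=1$ through repeated additions; a power $x_i^{e}$ is produced by $x_i\cdot x_i,\ x_i^{2}\cdot x_i,\ldots$; each monomial is the product of its powers; and the two sides are the sums of their monomials. Every gate is recorded by one fresh variable standing on the right-hand side of a single equation of an allowed form, so its value is forced by the earlier ones; the hypothesis $d_i=\deg(D,x_i)\ge1$ ensures that each $x_i$ occurs in some power and is thereby tied into $T$. Letting $a$ and $b$ carry the two final values, I would close the system with the single constraint $a\cdot w=b$ (where $w=1$), which over each $\K$ says precisely $a=b$ because $1$ is the multiplicative identity. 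Reading this finite list of equations off the monomials of $D$ is mechanical, so $n$ and $T\subseteq E_n$ are computed effectively.

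The hard part is to secure single-foldness (Condition 2) \emph{simultaneously} over arbitrary rings possibly containing zero divisors, over $\N$, and over $\N\setminus\{0\}$. The construction is tailored to this: no equation ever forces a variable to $0$ or uses cancellation peculiar to integral domains, and every new variable is \emph{defined} as a concrete sum or product of previously built quantities. Hence for any $\tilde{x}_1,\ldots,\tilde{x}_p$ the values of $\tilde{x}_{p+1},\ldots,\tilde{x}_n$ are uniquely determined, and they satisfy the closing constraint exactly when $A(\tilde{x})=B(\tilde{x})$, that is, when $D(\tilde{x})=0$; this delivers Conditions 1 and 2 together, and the cardinality statement follows as noted.
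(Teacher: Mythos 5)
The paper contains no proof of this lemma---it is imported by citation from \cite{Tyszka7}---so there is nothing internal to compare against; judged on its own, your construction (straight-line evaluation of the two monomial sums $A+1$ and $B+1$, the $+1$ normalization so that both sides remain nonempty and positive over $\N\setminus\{0\}$, and the closing equation $a\cdot w=b$ with $w=1$ to simulate the equality atom that $E_n$ lacks) is correct and is essentially the standard argument of that reference. You also handle the delicate points explicitly: uniqueness of the forced extension gives Condition~2, cancellativity of adding $1$ in every admissible $\K$ gives the backward implication of Condition~1, and positivity of every gate value keeps the witnesses inside $\N\setminus\{0\}$.
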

\begin{theorem}\label{the1} (\cite{Tyszka3a})
If a function \mbox{$\Gamma:\N \setminus \{0\} \to \N$} is computable, then there exists a positive
integer $m$ such that \mbox{$\Gamma(n)<f(n)$} for any \mbox{$n \geq m$}.
\end{theorem}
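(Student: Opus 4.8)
The plan is to prove the statement in the following constructive, contrapositive-free form: given any computable $\Gamma$, I will exhibit for every sufficiently large $n$ a single \emph{solvable} system $S_n \subseteq E_n$ every non-negative integer solution of which has some coordinate strictly exceeding $\Gamma(n)$. Since $f(n)$ is by definition the least $b$ for which every solvable $S \subseteq E_n$ possesses a solution with all coordinates at most $b$, the existence of such an $S_n$ immediately forces $f(n) \geq \Gamma(n)+1 > \Gamma(n)$. Establishing this for all $n$ beyond some threshold is exactly the asserted domination, with that threshold serving as the integer $m$.

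The first step is to convert $\Gamma$ into Diophantine data. Because $\Gamma$ is computable, the relation $y=\Gamma(a)$ is recursive, hence recursively enumerable, so by the Davis--Putnam--Robinson--Matiyasevich theorem in the form {\tt (R)} there is a \emph{fixed} polynomial $G$ with integer coefficients and auxiliary variables $\vec z$ such that, for all $a,y \in \N$, one has $y=\Gamma(a) \Longleftrightarrow \exists \vec z \ G(a,y,\vec z)=0$. I then apply Lemma~\ref{lem2} to the single polynomial $G$, viewing $a,y,\vec z$ as its variables, to obtain a fixed integer $n_1$ and a system $T \subseteq E_{n_1}$ whose solution set projected onto the coordinates $a,y$ is precisely the graph of $\Gamma$ (this is Condition~1 of Lemma~\ref{lem2}). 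The essential point is that $n_1$ depends only on $\Gamma$ and not on the target index $n$.

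Next I assemble, inside the $E_n$-format, the remaining ingredients for a given target $n$. I build the numerical constant $n$ from $x_\ell=1$ by doubling and bit-addition, which costs only $O(\log n)$ equations and variables; I force the $a$-variable of $T$ to equal this constant by an equation of the shape $a \cdot x_\ell = n$, so that in every solution $y$ is pinned to the value $\Gamma(n)$; and I adjoin one further equation $w = y + x_\ell$ introducing a variable $w$ forced to equal $\Gamma(n)+1$. Finally I pad the system up to exactly $n$ variables with a chain of squarings $x_j \cdot x_j = x_{j+1}$, which preserves solvability and can only enlarge solutions. The total variable count before padding is $n_1 + O(\log n)$, so for all sufficiently large $n$ it fits within $n$, yielding a solvable $S_n \subseteq E_n$ in every solution of which the coordinate $w$ equals $\Gamma(n)+1$; hence $f(n) \geq \Gamma(n)+1$.

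The main obstacle is the self-reference hidden in this construction: the system must effectively ``know'' its own index $n$ in order to compute $\Gamma(n)$, while simultaneously being required to live in $E_n$. The resolution is a size asymmetry — encoding the constant $n$ costs only $O(\log n)$ variables, whereas the representation $G$ of $\Gamma$ costs the constant $n_1$ independent of $n$ — so the total $n_1 + O(\log n)$ eventually drops below $n$ and the squaring chain absorbs the remaining slack. I would take particular care to verify that forcing $a=n$ really does pin $y=\Gamma(n)$ in \emph{all} solutions, using that Lemma~\ref{lem2} reproduces the graph of $\Gamma$ exactly, and that the padding chain neither destroys solvability nor lowers the least maximum-coordinate, so that the inequality $f(n) \geq \Gamma(n)+1$ is genuine for every large $n$.
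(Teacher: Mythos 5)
Your proposal is correct and takes essentially the same route as the paper: both apply the Davis--Putnam--Robinson--Matiyasevich theorem together with Lemma~\ref{lem2} to encode the graph of $\Gamma$ as a fixed system in $E$-format, build the constant $n$ inside the system to pin the input variable to $n$ (hence the output variable to $\Gamma(n)$), and adjoin one more variable forced to equal $\Gamma(n)+1$, which yields $f(n)>\Gamma(n)$ for all $n$ past a threshold. The only differences are implementation details: you encode $n$ in binary with $O(\log n)$ variables and pad with a squaring chain, whereas the paper builds $[n/2]$ by a unary chain of additions of $1$, doubles, adds a parity bit, and pads with equations $z_i=1$ to reach exactly $n$ variables with the explicit threshold $n\geq 6+2s$.
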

\begin{proof}
The Davis-Putnam-Robinson-Matiyasevich theorem and Lemma~\ref{lem2} for \mbox{$\K=\N$} imply that there
exists an integer \mbox{$s \geq 3$} such that for any \mbox{non-negative} integers~\mbox{$x_1,x_2$},
\begin{equation}
\tag*{\tt (E)}
(x_1,x_2) \in \Gamma \Longleftrightarrow \exists x_3,\ldots,x_s \in \N ~~\Phi(x_1,x_2,x_3,\ldots,x_s),
\end{equation}
where the formula \mbox{$\Phi(x_1,x_2,x_3,\ldots,x_s)$} is a conjunction of formulae of the forms
\mbox{$x_k=1$}, \mbox{$x_i+x_j=x_k$}, \mbox{$x_i \cdot x_j=x_k$} \mbox{$(i,j,k \in \{1,\ldots,s\})$}.
Let $[\cdot]$ denote the integer part function. For each integer \mbox{$n \geq 6+2s$},
\[
n-\left[\frac{n}{2}\right]-3-s \geq 6+2s-\left[\frac{6+2s}{2}\right]-3-s \geq 6+2s-\frac{6+2s}{2}-3-s=0
\]
For an integer \mbox{$n \geq 6+2s$}, let $S_n$ denote the following system
\[\left\{
\begin{array}{rcl}
{\rm all~equations~occurring~in~}\Phi(x_1,x_2,x_3,\ldots,x_s) \\
n-\left[\frac{n}{2}\right]-3-s {\rm ~equations~of~the~form~} z_i=1 \\
t_1 &=& 1 \\
t_1+t_1 &=& t_2 \\
t_2+t_1 &=& t_3 \\
&\ldots& \\
t_{\left[\frac{n}{2}\right]-1}+t_1 &=& t_{\left[\frac{n}{2}\right]} \\
t_{\left[\frac{n}{2}\right]}+t_{\left[\frac{n}{2}\right]} &=& w \\
w+y &=& x_1 \\
y+y &=& y {\rm ~(if~}n{\rm ~is~even)} \\
y &=& 1 {\rm ~(if~}n{\rm ~is~odd)} \\
x_2+t_1 &=& u 
\end{array}
\right.\]
with $n$ variables. By the equivalence~{\tt (E)}, $S_n$ is satisfiable over~$\N$.
If a \mbox{$n$-tuple} \mbox{$(x_1,x_2,x_3,\ldots,x_s,\ldots,w,y,u)$} of \mbox{non-negative}
integers solves $S_n$, then by the equivalence {\tt (E)},
\[
x_2=\Gamma(x_1)=\Gamma(w+y)=\Gamma\left(2 \cdot \left[\frac{n}{2}\right]+y\right)=\Gamma(n)
\]
Therefore, \mbox{$u=x_2+t_1=\Gamma(n)+1>\Gamma(n)$}. This shows that \mbox{$\Gamma(n)<f(n)$} for any \mbox{$n \geq 6+2s$}.
\end{proof}
\par
The following infinite loop in {\sl MuPAD} is also stored in \cite{Tyszka5}. It takes as input a positive
integer $n$ and returns a \mbox{non-negative} integer on each iteration. Unfortunately, on each iteration
the program executes a brute force algorithm, which is very time consuming.
\begin{verbatim}
input("input the value of n",n):
X:=[0]:
while TRUE do
Y:=combinat::cartesianProduct(X $i=1..n):
W:=combinat::cartesianProduct(X $i=1..n):
for s from 1 to nops(Y) do
for t from 1 to nops(Y) do
m:=0:
for i from 1 to n do
if Y[s][i]=1 and Y[t][i]<>1 then m:=1 end_if:
for j from i to n do
for k from 1 to n do
if Y[s][i]+Y[s][j]=Y[s][k] and Y[t][i]+Y[t][j]<>Y[t][k]
then m:=1 end_if:
if Y[s][i]*Y[s][j]=Y[s][k] and Y[t][i]*Y[t][j]<>Y[t][k]
then m:=1 end_if:
end_for:
end_for:
end_for:
if m=0 and max(Y[t][i] $i=1..n)<max(Y[s][i] $i=1..n)
then W:=listlib::setDifference(W,[Y[s]]) end_if:
end_for:
end_for:
print(max(max(W[z][u] $u=1..n) $z=1..nops(W))):
X:=append(X,nops(X)):
end_while:
\end{verbatim}
\newpage
\begin{theorem}\label{the2}
Let \mbox{$g(n,m)$} denote the number returned on the \mbox{$m$-th} iteration,
if $n$ is taken as input. Then,
\vskip 0.2truecm
\noindent
\centerline{$g(n,m) \leq m-1$,}
\vskip 0.2truecm
\noindent
\centerline{$0=g(n,1)<1=g(n,2) \leq g(n,3) \leq g(n,4) \leq \ldots$}
\vskip 0.2truecm
\noindent
and
\vskip 0.2truecm
\noindent
\centerline{$g(n,f(n))<f(n)=g(n,f(n)+1)=g(n,f(n)+2)=g(n,f(n)+3)=\ldots$}
\end{theorem}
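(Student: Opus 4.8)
The plan is to first pin down, in mathematical terms, what the loop computes on its $m$-th iteration. For a tuple $a=(a_1,\ldots,a_n)$ of \mbox{non-negative} integers write $\mathrm{Sat}(a)$ for the set of those equations in $E_n$ that $a$ satisfies, and say that $a$ is \emph{dominated} by $b$ if $\mathrm{Sat}(a)\subseteq\mathrm{Sat}(b)$ and $\max(b)<\max(a)$. A direct reading of the code shows that at the start of the $m$-th iteration the list \texttt{X} equals $[0,1,\ldots,m-1]$, so \texttt{Y} and the initial \texttt{W} both run through all tuples in $\{0,1,\ldots,m-1\}^n$; the flag (the variable \texttt{m} in the code, not to be confused with the iteration number) is set to $1$ exactly when the tuple \texttt{Y[t]} fails some equation satisfied by \texttt{Y[s]}, that is, exactly when $\mathrm{Sat}(\texttt{Y[s]})\not\subseteq\mathrm{Sat}(\texttt{Y[t]})$. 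Hence the set-difference step deletes \texttt{Y[s]} from \texttt{W} precisely when \texttt{Y[s]} is dominated by some tuple in $\{0,\ldots,m-1\}^n$, and so $g(n,m)$ is the largest coordinate occurring in any \emph{undominated} tuple with entries $\le m-1$.

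Next I would record the easy facts. Since every tuple in play has entries $\le m-1$, we get $g(n,m)\le m-1$; since the all-zero tuple can never be dominated (a dominator would need max $<0$), the list \texttt{W} is never empty and $g(n,m)\ge 0$. On the first iteration the only tuple is $(0,\ldots,0)$, giving $g(n,1)=0$, and on the second iteration the tuple $(1,\ldots,1)$ is undominated, because its only possible dominator is $(0,\ldots,0)$, which fails $x_1=1$; thus $g(n,2)=1$.

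The real work lies in proving the identity
\[
f(n)=\max\{\max(a):a\text{ is undominated}\},
\]
the maximum taken over \emph{all} \mbox{non-negative} integer $n$-tuples. Writing $\mathrm{opt}(a)=\min\{\max(b):\mathrm{Sat}(a)\subseteq\mathrm{Sat}(b)\}$, the definition of $f(n)$ rewrites as $f(n)=\max_a\mathrm{opt}(a)$: every satisfiable $S\subseteq E_n$ is contained in $\mathrm{Sat}(a)$ for a solution $a$, a $\max$-minimal solution of $S$ realises $\mathrm{opt}(a)$, and conversely each $\mathrm{Sat}(a)$ is itself satisfiable. The key observation is that a tuple $b$ realising $\mathrm{opt}(a)$ is itself undominated: any $c$ with $\mathrm{Sat}(b)\subseteq\mathrm{Sat}(c)$ also has $\mathrm{Sat}(a)\subseteq\mathrm{Sat}(c)$, whence $\max(c)\ge\mathrm{opt}(a)=\max(b)$. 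This upgrades $\max_a\mathrm{opt}(a)$ to a maximum over undominated tuples, the reverse inequality being immediate since an undominated $a$ satisfies $\mathrm{opt}(a)=\max(a)$.

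Finally I would reconcile the ``local'' domination used by the code (witnesses restricted to $\{0,\ldots,m-1\}^n$) with global domination. Because any dominator $b$ of a tuple $a$ with $\max(a)\le m-1$ has $\max(b)<\max(a)\le m-1$, such a $b$ automatically lies in $\{0,\ldots,m-1\}^n$; therefore, for tuples with entries $\le m-1$, local and global domination coincide, giving
\[
g(n,m)=\max\{\max(a):a\text{ undominated},\ \max(a)\le m-1\}.
\]
The sets on the right grow with $m$, which yields $g(n,2)\le g(n,3)\le\cdots$ and, together with the explicit values above, the full chain $0=g(n,1)<1=g(n,2)\le g(n,3)\le\cdots$. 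If $a^{\star}$ is an undominated tuple with $\max(a^{\star})=f(n)$, produced by the identity of the previous paragraph, then for $m\ge f(n)+1$ it is admissible while no undominated tuple exceeds it, so $g(n,m)=f(n)$; for $m=f(n)$ it is excluded by the constraint $\max(a)\le f(n)-1$, forcing $g(n,f(n))\le f(n)-1<f(n)$. This is exactly the displayed final chain. I expect the code-semantics identification of the first paragraph and the ``a realiser of $\mathrm{opt}(a)$ is undominated'' argument of the third paragraph to be the main obstacles; the bounds and the monotonicity are then routine.
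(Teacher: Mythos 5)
Your proposal is correct and takes essentially the same approach as the paper: the paper's notion of a \emph{duplicate} of $x$ is exactly your condition $\mathrm{Sat}(x)\subseteq\mathrm{Sat}(y)$, and its characterizations of $f(n)$ and $g(n,m)$ as the least $b$ such that every relevant tuple has a duplicate in $\{0,\ldots,b\}^n$ are the min-formulations dual to your max-over-undominated-tuples identities. The only difference is one of completeness: the paper merely states these two characterizations and stops, whereas you also verify the code semantics and derive the displayed inequalities from them.
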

\begin{proof}
Let us say that a tuple \mbox{$y=(y_1,\ldots,y_n) \in {\N}^n$} is a {\em duplicate}
of a tuple \mbox{$x=(x_1,\ldots,x_n) \in {\N}^n$}, if
\vskip 0.2truecm
\noindent
\centerline{$(\forall k \in \{1,\ldots,n\} ~(x_k=1 \Longrightarrow y_k=1)) ~\wedge$}
\par
\noindent
\centerline{$(\forall i,j,k \in \{1,\ldots,n\} ~(x_i+x_j=x_k \Longrightarrow y_i+y_j=y_k)) ~\wedge$}
\par
\noindent
\centerline{$(\forall i,j,k \in \{1,\ldots,n\} ~(x_i \cdot x_j=x_k \Longrightarrow y_i \cdot y_j=y_k))$}
\vskip 0.2truecm
\noindent
For each positive integer $n$, \mbox{$f(n)$} equals the smallest
\mbox{non-negative} integer $b$ such that for each \mbox{$x \in {\N}^n$}
there exists a duplicate of $x$ in \mbox{$\{0,\ldots,b\}^n$}.
For each positive integers $n$ and $m$, \mbox{$g(n,m)$} equals the smallest
\mbox{non-negative} integer $b$ such that for each \mbox{$x \in \{0,\ldots,m-1\}^n$}
there exists a duplicate of $x$ in \mbox{$\{0,\ldots,b\}^n$}.
\end{proof}
\par
If we replace the instruction:
\begin{verbatim}
         print(max(max(W[z][u] $u=1..n) $z=1..nops(W))):
\end{verbatim}
by the following two instructions:
\begin{verbatim}
         g:=max(max(W[z][u] $u=1..n) $z=1..nops(W)):
         if g=nops(X)-1 then print(g) end_if:
\end{verbatim}
then the changed code (\cite{Tyszka6}) performs an infinite computation of \mbox{$f(n)$}
which returns a finite sequence of \mbox{non-negative} integers. The same sequence is
returned by the algorithm presented in the flowchart below.
\newpage
\begin{center}
\includegraphics[scale=0.88]{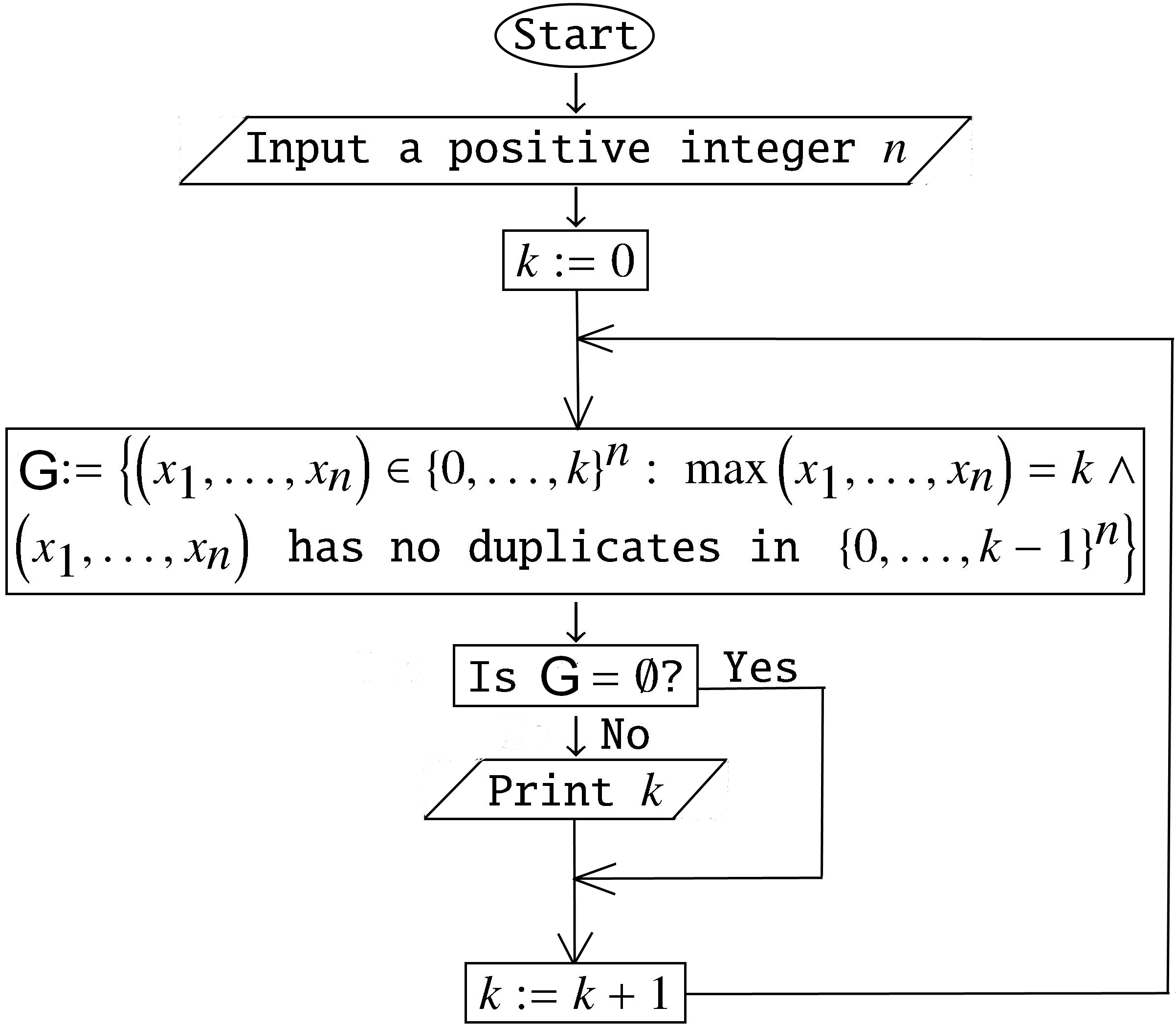}
\end{center}
\par
The next theorem is a corollary from Theorems~\ref{the1} and \ref{the2}.
\begin{theorem}\label{the3}
There exists a computable function \mbox{$\varphi: \N \times \N \to \N$} which satisfies the following conditions:
\vskip 0.2truecm
\noindent
{\tt 1)} For each \mbox{non-negative} integers $n$ and $l$,
\vskip 0.2truecm
\noindent
\centerline{$\varphi(n,l) \leq l$}
\vskip 0.2truecm
\noindent
{\tt 2)} For each \mbox{non-negative} integer $n$,
\vskip 0.2truecm
\noindent
\centerline{$0=\varphi(n,0)<1=\varphi(n,1) \leq \varphi(n,2) \leq \varphi(n,3) \leq \ldots$}
\vskip 0.2truecm
\noindent
{\tt 3)} For each \mbox{non-negative} integer $n$, the sequence \mbox{$\{\varphi(n,l)\}_{l \in \N}$}
is bounded from above.
\vskip 0.2truecm
\noindent
{\tt 4)} The function
\[
\N \ni n \stackrel{\normalsize \theta}{\normalsize \longrightarrow} \theta(n)=\lim_{l \to \infty} \varphi(n,l) \in \N \setminus \{0\}
\]
is strictly increasing and dominates all computable functions.
\vskip 0.2truecm
\noindent
{\tt 5)} For each \mbox{non-negative} integer $n$,
\vskip 0.2truecm
\noindent
\centerline{$\varphi(n,\theta(n)-1)<\theta(n)=\varphi(n,\theta(n))=\varphi(n,\theta(n)+1)=\varphi(n,\theta(n)+2)=\ldots$}
\end{theorem}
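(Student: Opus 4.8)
The plan is to obtain $\varphi$ by a simple reindexing of the function $g$ from Theorem~\ref{the2}. Specifically, I would set
\[
\varphi(n,l) = g(n+1,\,l+1)
\]
for all non-negative integers $n$ and $l$. The first point to settle is that $\varphi$ is computable. Since each pass through the {\sl while} loop is a finite, effective computation, the value $g(n+1,l+1)$ returned on the $(l+1)$-th iteration with input $n+1$ can be produced by running the loop for $l+1$ iterations and reading off the printed number; hence $g$, and therefore $\varphi$, is a computable function $\N \times \N \to \N$.

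With this definition, conditions {\tt 1)}, {\tt 2)}, {\tt 3)} and {\tt 5)} translate directly from the three displayed inequalities of Theorem~\ref{the2}. From $g(n,m) \leq m-1$ I get $\varphi(n,l) = g(n+1,l+1) \leq l$, which is {\tt 1)}. The chain $0 = g(n,1) < 1 = g(n,2) \leq g(n,3) \leq \ldots$ becomes $0 = \varphi(n,0) < 1 = \varphi(n,1) \leq \varphi(n,2) \leq \ldots$, which is {\tt 2)}. The last line of Theorem~\ref{the2} shows that the sequence $\{g(n+1,m)\}_m$ is eventually constant with value $f(n+1)$, so $\{\varphi(n,l)\}_l$ is bounded from above (condition {\tt 3)}) and
\[
\theta(n) = \lim_{l \to \infty} \varphi(n,l) = \lim_{m \to \infty} g(n+1,m) = f(n+1).
\]
Substituting $\theta(n) = f(n+1)$ into the last line of Theorem~\ref{the2} then yields condition {\tt 5)}, after shifting the iteration index by one.

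The only genuine work lies in condition {\tt 4)}. For strict monotonicity of $\theta(n) = f(n+1)$ I would invoke Lemma~\ref{lem0}, which gives $f(m) < f(m+1)$ for every $m \geq 2$; the remaining base case $f(1) < f(2)$ follows from $f(1) = 1$ together with the explicit bound $f(n) \geq 2^{\textstyle 2^{n-2}} \geq 2$ for $n \geq 2$ recorded just before Lemma~\ref{lem0}. Hence $f$ is strictly increasing on $\N \setminus \{0\}$, so $\theta$ is strictly increasing on $\N$. For domination, I would take an arbitrary computable $\Gamma$, pass to the computable shift $\Gamma^*(k) = \Gamma(k-1)$ on $\N \setminus \{0\}$, apply Theorem~\ref{the1} to obtain an index past which $\Gamma^*(k) < f(k)$, and then re-substitute $k = n+1$ to conclude $\Gamma(n) < f(n+1) = \theta(n)$ for all sufficiently large $n$.

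The main obstacle I anticipate is purely bookkeeping: keeping the two index shifts (input $n \mapsto n+1$ and iteration $l \mapsto l+1$) consistent across all five conditions, and handling the re-indexing in the domination argument so that Theorem~\ref{the1}, stated for functions on $\N \setminus \{0\}$, applies to the function $\theta$ defined on all of $\N$. No genuinely new estimate is needed --- Theorem~\ref{the3} is a corollary of Theorems~\ref{the1} and~\ref{the2} once the correct reindexing is fixed.
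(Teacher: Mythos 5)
Your proposal is correct and takes exactly the paper's route: the paper's entire proof consists of the same definition $\varphi(n,l)=g(n+1,l+1)$ (followed by a {\sl MuPAD} code computing it), leaving the verification of conditions {\tt 1)}--{\tt 5)} implicit. Your elaboration --- reading {\tt 1)}, {\tt 2)}, {\tt 3)}, {\tt 5)} off Theorem~\ref{the2}, getting strict monotonicity of $\theta(n)=f(n+1)$ from Lemma~\ref{lem0} plus the base case $f(1)=1<2\leq f(2)$, and handling domination via the shift $\Gamma^*(k)=\Gamma(k-1)$ in Theorem~\ref{the1} --- correctly fills in the details the paper omits.
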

\begin{proof}
Let \mbox{$\varphi(n,l)=g(n+1,l+1)$}. The following {\sl MuPAD} code,
which is also stored in \cite{Tyszka3}, computes the values of \mbox{$\varphi(n,l)$}.
\begin{verbatim}
input("input the value of n",n):
input("input the value of l",l):
n:=n+1:
X:=[i $ i=0..l]:
Y:=combinat::cartesianProduct(X $i=1..n):
W:=combinat::cartesianProduct(X $i=1..n):
for s from 1 to nops(Y) do
for t from 1 to nops(Y) do
m:=0:
for i from 1 to n do
if Y[s][i]=1 and Y[t][i]<>1 then m:=1 end_if:
for j from i to n do
for k from 1 to n do
if Y[s][i]+Y[s][j]=Y[s][k] and Y[t][i]+Y[t][j]<>Y[t][k]
then m:=1 end_if:
if Y[s][i]*Y[s][j]=Y[s][k] and Y[t][i]*Y[t][j]<>Y[t][k]
then m:=1 end_if:
end_for:
end_for:
end_for:
if m=0 and max(Y[t][i] $i=1..n)<max(Y[s][i] $i=1..n)
then W:=listlib::setDifference(W,[Y[s]]) end_if:
end_for:
end_for:
print(max(max(W[z][u] $u=1..n) $z=1..nops(W))):
\end{verbatim}
\end{proof}
\par
Let us fix a computable enumeration \mbox{$D_0,D_1,D_2,\ldots$} of all Diophantine equations.
The following flowchart illustrates an infinite computation of a \mbox{limit-computable} function
that cannot be bounded by any computable function. For each \mbox{non-negative} integer $n$,
the function has a \mbox{non-zero} value at $n$ if and only if the equation $D_n$ has a solution
in \mbox{non-negative} integers. Unfortunately, the function does not have any easy implementation.
\newpage
\begin{center}
\includegraphics[scale=0.88]{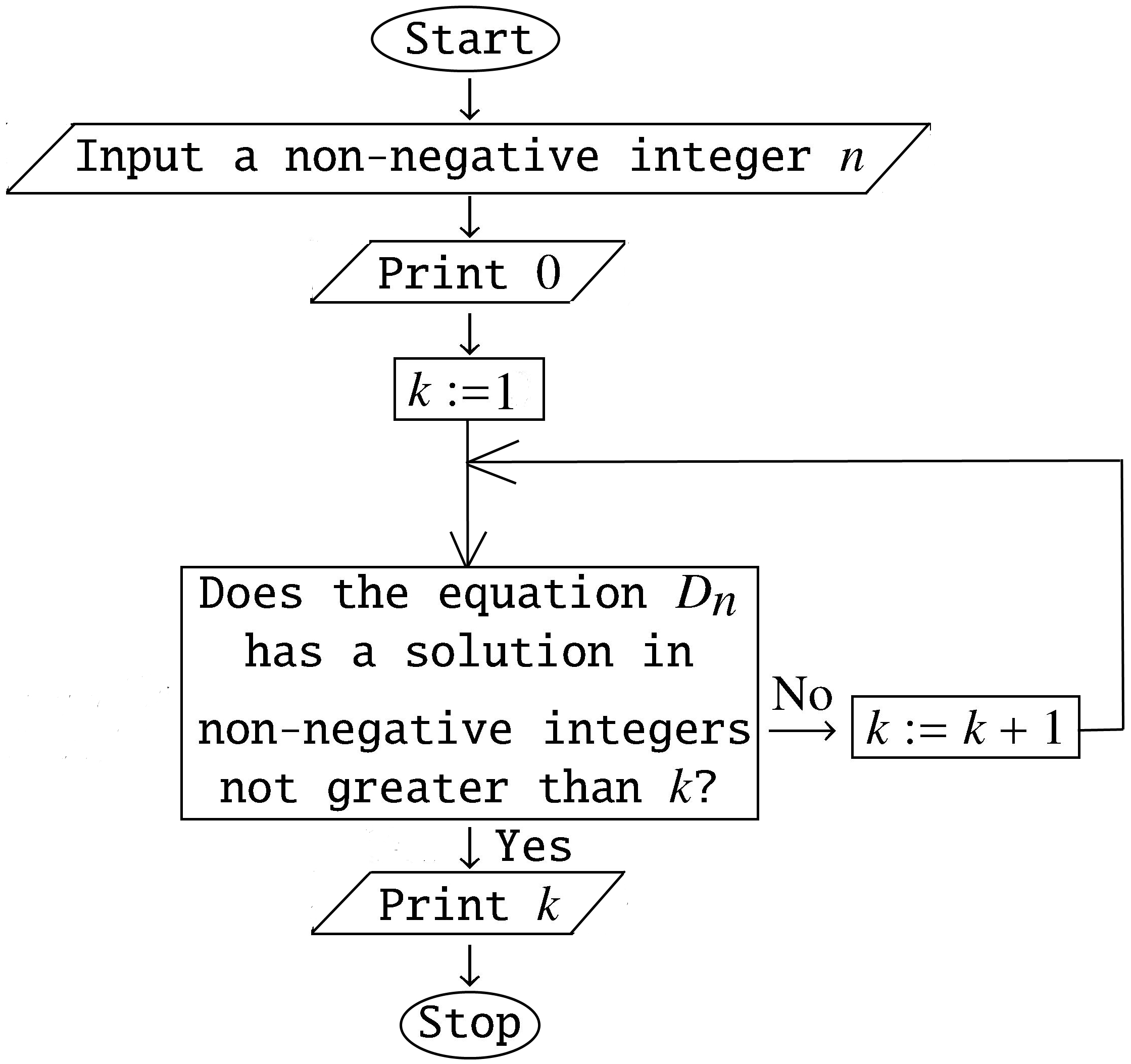}
\end{center}
\par
Frequently, it is hardly to decide whether or not a \mbox{limit-computable} function is computable.
For a positive integer $n$, let \mbox{$\chi(n)$} denote the smallest \mbox{non-negative} integer $b$ such
that for each system \mbox{$S \subseteq E_n$} with a unique solution in \mbox{non-negative} integers
\mbox{$x_1,\ldots,x_n$} this solution belongs to \mbox{$[0,b]^n$}. The function $\chi$ is
\mbox{limit-computable} and the code presented in \cite{Tyszka1a} (and stored in~\cite{Tyszkasf}) performs
an infinite computation of $\chi$. A computability of $\chi$ is an open problem.
Matiyasevich's conjecture on \mbox{single-fold} Diophantine representations~(\cite{Matiyasevich1})
implies that $\chi$ is not computable. In this case, $\chi$ dominates all computable functions.
\vskip 0.2truecm
\par
The following {\sl MuPAD} code is also stored in \cite{Tyszka1}.
\begin{verbatim}
input("input the value of n",n):
print(0):
A:=op(ifactor(210*(n+1))):
B:=[A[2*i+1] $i=1..(nops(A)-1)/2]:
S:={}:
for i from 1 to floor(nops(B)/4) do
if B[4*i]=1 then S:=S union {B[4*i-3]} end_if:
if B[4*i]=2 then S:=S union {[B[4*i-3],B[4*i-2],B[4*i-1],"+"]} end_if:
if B[4*i]>2 then S:=S union {[B[4*i-3],B[4*i-2],B[4*i-1],"*"]} end_if:
end_for:
m:=2:
repeat
C:=op(ifactor(m)):
W:=[C[2*i+1]-1 $i=1..(nops(C)-1)/2]:
T:={}:
for i from 1 to nops(W) do
for j from 1 to nops(W) do
for k from 1 to nops(W) do
if W[i]=1 then T:=T union {i} end_if:
if W[i]+W[j]=W[k] then T:=T union {[i,j,k,"+"]} end_if:
if W[i]*W[j]=W[k] then T:=T union {[i,j,k,"*"]} end_if:
end_for:
end_for:
end_for:
m:=m+1:
until S minus T={} end_repeat:
print(max(W[i] $i=1..nops(W))):
\end{verbatim}
\begin{theorem}\label{the4}
The above code implements a \mbox{limit-computable} function \mbox{$\xi: \N \to \N$} that cannot
be bounded by any computable function. The code takes as input a \mbox{non-negative} integer $n$,
immediately returns $0$, and computes a system $S$ of polynomial equations. If the \mbox{repeat-until}
loop terminates for $S$, then the next instruction is executed and returns \mbox{$\xi(n)$}.
\end{theorem}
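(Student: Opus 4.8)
The plan is first to decode what the program actually computes, then to characterize termination of the \mbox{repeat-until} loop in terms of Diophantine solvability, and finally to reduce Hilbert's Tenth Problem to the assumption that $\xi$ admits a computable bound.

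First I would unwind the two occurrences of \texttt{ifactor}. For a positive integer $k$, the sequence \mbox{$\mathrm{op}(\mathrm{ifactor}(k))$} has the form \mbox{$(1,p_1,e_1,\ldots,p_r,e_r)$}; hence the list $B$ formed from \mbox{$210(n+1)$} is exactly the list \mbox{$(e_1,\ldots,e_r)$} of prime exponents, and the list $W$ formed from $m$ is \mbox{$(e_1-1,\ldots,e_r-1)$}. Reading $B$ in consecutive blocks of four, the \texttt{for} loop produces a finite set $S$ of codes for equations of the forms \mbox{$x_a=1$}, \mbox{$x_a+x_b=x_c$}, \mbox{$x_a \cdot x_b=x_c$} (selected by the fourth entry of each block being $1$, $2$, or $>2$), while the triple loop inside the \mbox{repeat-until} loop produces the set $T$ of codes of precisely those equations satisfied by the tuple $W$. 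Thus \mbox{$S \subseteq T$} holds if and only if $W$ solves $S$. By uniqueness of factorization, as $m$ runs through \mbox{$2,3,4,\ldots$} the tuple $W$ ranges over every finite nonempty tuple of \mbox{non-negative} integers, and as $n$ ranges over $\N$ the system $S$ ranges over every finite system of equations of the three forms above; both correspondences are computable.

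Next I would establish the termination characterization: the \mbox{repeat-until} loop halts if and only if $S$ has a solution in \mbox{non-negative} integers, and in that case the printed \mbox{$\max(W)$} is the largest coordinate of the solution found. One direction is immediate, since any solution of $S$, padded with zeros, is realized as some $W$; for the converse, at the least qualifying $m$ the restriction of $W$ to the indices occurring in $S$ is a solution all of whose coordinates are \mbox{$\leq \max(W)$}. Limit-computability of $\xi$ follows at once: letting \mbox{$h(n,s)$} be the value last printed by the code on input $n$ within $s$ steps, the code prints $0$ immediately and prints at most one further value (if and when the loop halts), so \mbox{$\lim_{s\to\infty} h(n,s)$} exists and equals $\xi(n)$, and $h$ is computable by simulation.

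Finally I would show that no computable function bounds $\xi$. Suppose for contradiction that \mbox{$\xi(n) \leq \gamma(n)$} for all $n$, with $\gamma$ computable. Given any Diophantine equation \mbox{$D(x_1,\ldots,x_p)=0$}, the \mbox{Davis-Putnam-Robinson-Matiyasevich} theorem together with Lemma~\ref{lem2} for \mbox{$\K=\N$} yields a system \mbox{$T \subseteq E_n$} that is solvable over $\N$ exactly when \mbox{$D=0$} is, and by the surjectivity noted above some $n_0$ satisfies \mbox{$S(n_0)=T$}. By the termination characterization, if \mbox{$S(n_0)$} is solvable then it has a solution with all coordinates \mbox{$\leq \xi(n_0) \leq \gamma(n_0)$}, and if it is unsolvable it has no solution at all; hence solvability of \mbox{$S(n_0)$}, and therefore of \mbox{$D=0$}, is decided by a finite search over tuples with entries in \mbox{$\{0,\ldots,\gamma(n_0)\}$}, contradicting the undecidability of Hilbert's Tenth Problem. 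The main obstacle is the bookkeeping of the first two steps: verifying the \texttt{ifactor} decoding, confirming that \mbox{$n \mapsto S(n)$} is onto all finite systems of the three forms, and proving the termination characterization exactly. Once these are in place, both the \mbox{limit-computability} and the \mbox{non-bounding} arguments are routine.
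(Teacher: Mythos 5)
Your proposal is correct and follows essentially the same route as the paper: decode the \texttt{ifactor}-based encoding to see that $n \mapsto S$ is onto all finite systems of the three atomic forms, observe that the \mbox{repeat-until} loop is a brute-force search that terminates exactly when $S$ is solvable over $\N$, and then combine Lemma~\ref{lem2} with the negative solution to Hilbert's Tenth Problem to rule out any computable bound. The only difference is one of detail: the paper compresses the final step into a single sentence, whereas you spell out the reduction (a computable bound on $\xi$ would yield a decision procedure for Diophantine solvability by bounded search), which is exactly the argument the paper leaves implicit.
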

\begin{proof}
Let \mbox{$n \in \N$}, and let \mbox{${p_1}^{t(1)} \cdot \ldots \cdot {p_s}^{t(s)}$} be a prime
factorization of \mbox{$210 \cdot (n+1)$}, where \mbox{$t(1),\ldots,t(s)$} denote positive
integers. Obviously, \mbox{$p_1=2$}, \mbox{$p_2=3$}, \mbox{$p_3=5$}, and \mbox{$p_4=7$}.
\vskip 0.2truecm
\par
For each positive integer $i$ that satifies \mbox{$4i \leq s$} and \mbox{$t(4i)=1$},
the code constructs the equation \mbox{$x_{t(4i-3)}=1$}.
\vskip 0.2truecm
\par
For each positive integer $i$ that satifies \mbox{$4i \leq s$} and \mbox{$t(4i)=2$},
the code constructs the equation \mbox{$x_{t(4i-3)}+x_{t(4i-2)}=x_{t(4i-1)}$}.
\vskip 0.2truecm
\par
For each positive integer $i$ that satifies \mbox{$4i \leq s$} and \mbox{$t(4i)>2$},
the code constructs the equation \mbox{$x_{t(4i-3)} \cdot x_{t(4i-2)}=x_{t(4i-1)}$}.
\vskip 0.2truecm
\par
The last three facts imply that the code assigns to $n$ a finite and \mbox{non-empty}
system $S$ which consists of equations of the forms: \mbox{$x_k=1$}, \mbox{$x_i+x_j=x_k$},
and \mbox{$x_i \cdot x_j=x_k$}. Conversely, each such system $S$ is assigned to some
\mbox{non-negative} integer $n$.
\vskip 0.2truecm
\par
Starting with the instruction $m:=2$, the code tries to find a solution of $S$
in \mbox{non-negative} integers by performing a \mbox{brute-force} search. If a solution
exists, then the searching terminates and the code returns a \mbox{non-negative} integer
\mbox{$\xi(n)$} such that the system $S$ has a solution in \mbox{non-negative} integers
not greater than \mbox{$\xi(n)$}. In the opposite case, execution of the code never terminates.
\vskip 0.2truecm
\par
A negative solution to Hilbert's Tenth Problem (\cite{Matiyasevich}) and Lemma~\ref{lem2} for \mbox{$\K=\N$}
imply that the code implements a \mbox{limit-computable} function \mbox{$\xi: \N \to \N$} that cannot
be bounded by any computable function.
\end{proof}
\par
Execution of the last code does not terminate for $n=7 \cdot 11 \cdot 13 \cdot 17 \cdot 19-1=323322$,
when the code tries to find a solution of the system \mbox{$\{x_1+x_1=x_1,~x_1=1\}$}. Execution terminates
for any \mbox{$n<323322$}, when the code returns $0$ and next $1$ \mbox{or $0$}. The last claim holds only theoretically.
In fact, for \mbox{$n=2^{18}-1=262143$}, the algorithm of the code returns $1$ solving the equation
\mbox{$x_{19}=1$} on the $\Bigl(2 \cdot 3 \cdot 5 \cdot 7 \cdot 11 \cdot 13 \cdot 17 \cdot 19 \cdot 23
\cdot 29 \cdot 31 \cdot 37 \cdot 41 \cdot 43 \cdot 47 \cdot 53 \cdot 59 \cdot 61 \cdot {67}^2-1\Bigr)$-th
iteration.
\vskip 0.2truecm
\par
To avoid long and and trivial computations, the next code implements a slightly changed function
\mbox{$\xi:\N \to \N$}. The code, which is also stored in \cite{Tyszka2}, computes the system $S$
and the set $V$ of all variables which occur in equations from $S$. If any variable in $V$ has an
index greater than \mbox{${\rm card}(V)$}, then the code continues execution as for the system
\mbox{$S=\{x_1=1\}$}, which is formally defined as the set \mbox{$\{1\}$}.
\begin{verbatim}
input("input the value of n",n):
print(0):
A:=op(ifactor(210*(n+1))):
B:=[A[2*i+1] $i=1..(nops(A)-1)/2]:
S:={}:
V:={}:
for i from 1 to floor(nops(B)/4) do
if B[4*i]=1 then
S:=S union {B[4*i-3]}:
V:=V union {B[4*i-3]}:
end_if:
if B[4*i]=2 then
S:=S union {[B[4*i-3],B[4*i-2],B[4*i-1],"+"]}:
V:=V union {B[4*i-3],B[4*i-2],B[4*i-1]}:
end_if:
if B[4*i]>2 then
S:=S union {[B[4*i-3],B[4*i-2],B[4*i-1],"*"]}:
V:=V union {B[4*i-3],B[4*i-2],B[4*i-1]}:
end_if:
end_for:
if max(V[i] $i=1..nops(V))>nops(V) then S:={1} end_if:
m:=2:
repeat
C:=op(ifactor(m)):
W:=[C[2*i+1]-1 $i=1..(nops(C)-1)/2]:
T:={}:
for i from 1 to nops(W) do
for j from 1 to nops(W) do
for k from 1 to nops(W) do
if W[i]=1 then T:=T union {i} end_if:
if W[i]+W[j]=W[k] then T:=T union {[i,j,k,"+"]} end_if:
if W[i]*W[j]=W[k] then T:=T union {[i,j,k,"*"]} end_if:
end_for:
end_for:
end_for:
m:=m+1:
until S minus T={} end_repeat:
print(max(W[i] $i=1..nops(W))):
\end{verbatim}
\par
The commercial version of {\sl MuPAD} is no longer available as a \mbox{stand-alone} product,
but only as the {\sl Symbolic Math Toolbox} of {\sl MATLAB}. Fortunately, the presented codes
can be executed by {\sl MuPAD Light}, which was and is free, \mbox{see \cite{Tyszka8}}.
\vskip 0.2truecm
\par
Let ${\mathcal P}$ denote a predicate calculus with equality and one binary relation symbol,
and let $\Lambda$ be a computable function that maps $\N$ onto the set of sentences of ${\mathcal P}$.
The following pseudocode in {\sl MuPAD} implements a \mbox{limit-computable} function
\mbox{$\sigma: \N \to \N$} that cannot be bounded by any computable function.
\vskip 0.2truecm
\noindent
{\tt input("input the value of n",n):}\\
{\tt print(0):}\\
{\tt k:=1:}\\
{\tt while} $\Lambda(n)$ $holds$ $in$ $all$ $models$ $of$ $size$ $k$ {\tt do}\\
{\tt k:=k+1:}\\
{\tt end\_while:}\\
{\tt print(k):}
\vskip 0.2truecm
\par
The proof follows from the fact that the set of sentences of ${\mathcal P}$ that are true in all
finite and \mbox{non-empty} models is not recursively enumerable, see \cite[p.~129]{EF},
where it is concluded from Trakhtenbrot's theorem. The author has no idea how to transform the
pseudocode into a correct computer program.
\vskip 0.2truecm
\par
The next theorem, which is now widely known, strengthens Janiczak's result mentioned earlier.
\begin{theorem}\label{the5}
There exists a computable function \mbox{$g:\N \times \N \to \N$} such that the
function \mbox{$f:\N \to \N$} defined by
\begin{displaymath}
f(n)=\left\{
\begin{array}{l}
0, {\rm ~~if~~} \{m \in \N: g(n,m)=0\}=\emptyset\\
{\rm min}\Bigl\{m \in \N: g(n,m)=0\Bigr\}, {\rm ~~if~~} \{m \in \N: g(n,m)=0\} \neq \emptyset
\end{array}
\right.
\end{displaymath}
cannot be bounded by any computable function.
\end{theorem}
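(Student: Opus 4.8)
The plan is to construct $g$ directly from the undecidability of Hilbert's Tenth Problem, reusing the idea of the code preceding Theorem~\ref{the4}, which decodes an integer into a candidate solution through its prime factorization. First I would fix a computable enumeration $D_0, D_1, D_2, \ldots$ of all Diophantine equations together with a computable decoding that sends each positive integer $m$ to a finite tuple of \mbox{non-negative} integers, arranged so that every such tuple is the decoding of some $m$. For all \mbox{non-negative} integers $n$ and $m$, I then set $g(n,m)=0$ exactly when $m \geq 1$ and the tuple decoded from $m$ is a solution of $D_n$ in \mbox{non-negative} integers, and $g(n,m)=1$ otherwise; in particular $g(n,0)=1$. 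This $g$ is computable, because for given $n$ and $m$ one computes $D_n$ from the enumeration, decodes $m$, and tests the resulting numerical equality.

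With this definition, the set $\{m \in \N : g(n,m)=0\}$ is non-empty if and only if $D_n$ has a solution in \mbox{non-negative} integers, and whenever it is non-empty its minimum is at least $1$, since $g(n,0)=1$. Consequently $f(n)=0$ holds exactly when $D_n$ is unsolvable, whereas $f(n) \geq 1$ equals the least code of a solution when $D_n$ is solvable. The value $f(n)=0$ thus separates the two cases cleanly, which is the sole purpose of forcing $g(n,0)=1$.

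The property of $f$ would then follow by contradiction. Suppose a computable function $h$ satisfied $f(n) \leq h(n)$ for all $n$. I would decide membership in the set $\{n \in \N : D_n$ is solvable$\}$ as follows: given $n$, compute $h(n)$ and evaluate $g(n,1), g(n,2), \ldots, g(n,h(n))$; report that $D_n$ is solvable if some value equals $0$, and unsolvable otherwise. This is correct because, when $D_n$ is solvable, the witnessing minimum $f(n)$ lies in $\{1, \ldots, h(n)\}$ and is detected, while when $D_n$ is unsolvable none of these values is $0$. A reading of ``bounded'' that requires $f(n) \leq h(n)$ only for all sufficiently large $n$ changes nothing, since correcting the finitely many exceptional instances by a finite table still yields a decision procedure. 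The existence of such a procedure contradicts the negative solution to Hilbert's Tenth Problem~(\cite{Matiyasevich}), so no computable function can bound $f$.

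The only delicate point is the bookkeeping in the construction of $g$: I must ensure that the decoding is onto all finite tuples of \mbox{non-negative} integers, so that solvability of $D_n$ really yields some $m$ with $g(n,m)=0$, and that the convention $g(n,0)=1$ keeps the least such $m$ strictly positive, preventing a solvable instance from producing the default value $f(n)=0$ reserved for unsolvable instances. Once this encoding is fixed, the \mbox{bounded-search} argument is entirely routine and the theorem follows.
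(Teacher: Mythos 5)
Your proof is correct and takes essentially the same route as the paper: the paper's one-line proof simply points to its $\xi$-codes, which realize exactly the $g$ you describe (decode $n$ into a system of Diophantine equations, decode $m$ into a candidate tuple, test for a solution), with unboundedness obtained from the negative solution to Hilbert's Tenth Problem via the same bounded-search reduction. Your write-up just makes explicit the encoding conventions and the contradiction that the paper leaves implicit.
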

\begin{proof}
The last pseudocode and the previous two codes compute appropriate functions.
\end{proof}
\par
The last pseudocode and the previous two codes have metamathematical character as they
syntactically compute mathematical objects (sentences, systems of equations) that correspond
to \mbox{non-negative} integers. The author will try to write a computer code with the
following properties:
\vskip 0.2truecm
\noindent
{\tt 6)} The first instruction takes as input a \mbox{non-negative} integer $n$.
\vskip 0.2truecm
\noindent
{\tt 7)} The execution of the code may terminate or not.
\vskip 0.2truecm
\noindent
{\tt 8)} The second instruction returns $0$.
\vskip 0.2truecm
\noindent
{\tt 9)} The last instruction returns a positive integer.
\vskip 0.2truecm
\noindent
{\tt 10)} All other instructions return nothing.
\vskip 0.2truecm
\noindent
{\tt 11)} The code implements some \mbox{limit-computable} function \mbox{$\mu: \N \to \N$}
that cannot be bounded by any computable function.
\vskip 0.2truecm
\noindent
{\tt 12)} The code does not use any enumeration of mathematical objects which differs from
the canonical enumeration of \mbox{non-negative} integers.
\newpage
A possible algorithm is simple in theory.
Let ${\cal B}$ be any subset of $\N$ which is recursively enumerable but not recursive.
By the Davis-Putnam-Robinson-Matiyasevich theorem, there exists a polynomial
\mbox{$D(x,x_1,\ldots,x_m)$} with integer coefficients such that for each \mbox{$n \in \N$},
\[
n \in {\cal B} \Longleftrightarrow \exists x_1 \ldots x_m \in \N ~~D(n,x_1,\ldots,x_m)=0
\]
The algorithm is presented in the flowchart below.
\begin{center}
\includegraphics[scale=0.88]{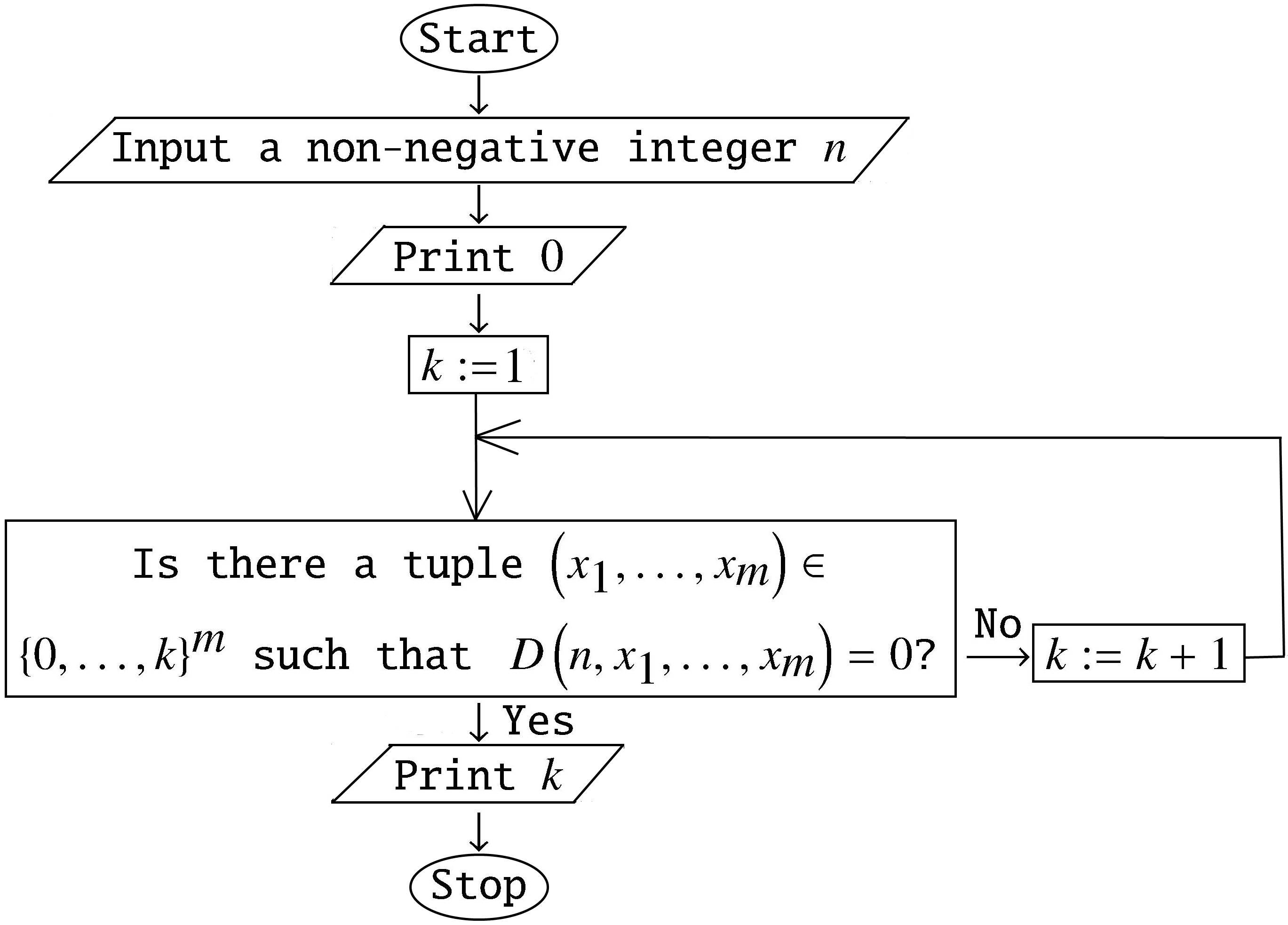}
\end{center}
\par
For each \mbox{non-negative} integer $n$, the implemented function $\mu$ has a \mbox{non-zero} value at $n$ if and
only if \mbox{$n \in {\cal B}$}. Unfortunately, for each recursively enumerable set \mbox{${\cal B} \subseteq \N$}
which is not recursive, any so far constructed polynomial \mbox{$D(x,x_1,\ldots,x_m)$} is complicated and its values
cannot be easily computed.

\noindent
Apoloniusz Tyszka\\
University of Agriculture\\
Faculty of Production and Power Engineering\\
Balicka 116B, 30-149 Krak\'ow, Poland\\
E-mail address: \url{rttyszka@cyf-kr.edu.pl}
\end{sloppypar}
\end{document}